\spnewtheorem{axiom}{Axiom}{\bfseries}{\rmfamily}
\algrenewcommand\algorithmicrequire{\textbf{Precondition:}}
\algrenewcommand\algorithmicensure{\textbf{Postcondition:}}
\newcommand*\Let[2]{\State #1 $\gets$ #2}
\algnewcommand\andif{\textbf{ and }}
\title{Automatic and Transparent Transfer of Theorems along Isomorphisms \\
in the {\sc Coq} Proof Assistant}
\author{Theo Zimmermann\inst{1} \and Hugo Herbelin\inst{2}}
\institute{\'Ecole Normale Sup\'erieure, Paris, France \\ \email{theo.zimmermann@ens.fr}
\and Inria Paris-Rocquencourt, Paris, France \\ \email{hugo.herbelin@inria.fr}}
\begin{document}
\maketitle

\begin{abstract}
In mathematics, it is common practice to have several constructions
for the same objects.
Mathematicians will identify them modulo isomorphism and will
not worry later on which construction they use, as theorems proved
for one construction
will be valid for all.

When working with proof assistants, it is also common to see several
data-types representing the same objects.
This work aims at making the use of several isomorphic constructions
as simple and as transparent as it can be done informally in
mathematics.
This requires inferring automatically the missing proof-steps.

We are designing an algorithm which finds and fills these missing
proof-steps and we are implementing it as a plugin for \textsc{Coq}\footnote{
This plugin introduces a new tactic called \texttt{exact modulo}.
Its most recent version is available on the web at
\url{https://github.com/Zimmi48/transfer}.}.
\end{abstract}

\section{Introduction}

With examples such as the well-known relation between
linear maps and matrices, the various constructions of real numbers
(equivalence classes of Cauchy sequences, Dedekind cuts,
infinite sequences of digits, subset of complex numbers), we see that
there are a great many cases when identifying several constructions
of the same objects can be useful in mathematics. In particular, proofs
are then done on the most convenient one but theorems apply to all.

In formal systems like \textsc{Coq} \cite{Coq:manual}, a canonical
example is the various constructions available for natural numbers.
The most natural construction and the closest to the mathematical view
is unary (\texttt{0}, \texttt{S 0}, \texttt{S (S 0)} and so on) while
the more efficient binary construction is closest to what is available
in most programming languages.

When several constructions coexist, they often share an axiomatic
representation, abstracting away from the internal details.
In \textsc{Coq}, it is possible to do proofs directly on the
axiomatic representation thanks to the module and functor system 
\cite{chrzkaszcz2003implementing}.
While this has the advantage of
factoring proofs, it also makes the proof harder as it does not allow taking
advantage of the specifics of the implementation.

The purpose of this work is to make easy to transport theorems
to all isomorphic constructions even when the proof relies on one
particular such construction. In an informal setting,
the mathematician would declare that ``we can identify the two structures''
once she has proved they were isomorphic and would proceed from there.
Our goal is to justify that claim because it will be that missing
justification that the proof checker will ask for.
Moreover, we need to determine when this justification is missing and
insert it automatically.

Although we focus on isomorphic structures in our description of the
problem and in our examples, we want to emphasize that we thrive to be
as general as possible and require as little as possible to allow
the automatic transfer of a theorem. Sometimes an isomorphism is required
but sometimes a weaker correspondence is sufficient.
Our algorithm will typically allow the following
transfer:

\begin{example}
    Take two sets $A$ and $A'$.
    If we have the following result on the first set:
    \begin{axiom}[A is empty]
        \[\forall x \in A, \bot \enspace .\]
    \end{axiom}
    then a surjective function $f : A \rightarrow A'$ is all we need to
    transfer the result and get:
    \begin{theorem}[A' is empty]
        \[\forall x' \in A', \bot \enspace .\]
    \end{theorem}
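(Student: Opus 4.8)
The plan is to instantiate surjectivity and then fall back on the hypothesis that $A$ is empty. First I would introduce an arbitrary element $x' \in A'$, so that the goal becomes simply $\bot$. Since $f : A \rightarrow A'$ is surjective, there exists some $x \in A$ with $f(x) = x'$; I would obtain such a witness $x$ (discarding the equation $f(x) = x'$, which is not needed). Applying the axiom ``A is empty'' to this $x$ yields $\bot$, which closes the goal; discharging the introduction of $x'$ gives $\forall x' \in A', \bot$.

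Mathematically there is no real obstacle: the argument is a two-step chain, pull $x'$ back to some $x \in A$, then use emptiness of $A$. It is worth noting that the value $f(x) = x'$ plays no role, so surjectivity is more than we need: what is actually used is only that $A'$ being inhabited forces $A$ to be inhabited, i.e. emptiness is transported \emph{backwards} along any function whose codomain is hit. This is precisely the phenomenon the paper highlights — the full isomorphism, and even full surjectivity, is overkill, and the transfer mechanism should be able to exploit just the relevant direction of a weak correspondence.

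Consequently, the genuine difficulty is not the proof but its automatic synthesis: the plugin must recognize that a universally quantified goal over $A'$ is to be attacked by introduction, that the freshly introduced variable should be transported into $A$ using the surjectivity hypothesis, and that the residual goal then matches the axiom up to that transport. The main obstacle is therefore the design of the proof search together with a uniform way of declaring the correspondence relation (here, surjectivity of $f$) so that the algorithm can locate and insert these steps transparently, without the user having to spell out the pull-back by hand.
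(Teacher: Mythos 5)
Your proof is correct and follows essentially the same route as the one the paper's algorithm synthesizes: introduce $x'$, pull it back to an element of $A$, and apply the emptiness axiom. The only cosmetic difference is that the paper's setup provides the witness directly as $g(x')$ via the declared right-inverse $g$ (so no existential elimination is needed), and your side remarks — that the equation $f(x)=x'$ is unused and that full surjectivity is more than required here — are accurate and consistent with the paper's own observation that weaker correspondences can suffice.
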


    Here is the complete corresponding \textsc{Coq} development
    (using our plugin -- although in that case, it is extremely easy to
    build the proof by hand):
    \begin{verbatim}
    Parameter A A' : Set.
    Axiom emptyA : ∀ x : A, False.
    Parameter f : A → A'.
    Parameter g : A' → A.
    Axiom surjf : ∀ x' : A', f (g x') = x'.
    Declare Surjection f by (g, surjf).
    Theorem emptyA' : ∀ x' : A', False.
      exact modulo emptyA.
    Qed.\end{verbatim}
\end{example}

In the remainder of this text, we will start by presenting our current
algorithm which is able to transfer a limited but already interesting
set of theorems. Then, we will detail our ideas to generalize it.
Finally, we will compare our approach to previous related works.

\section{How to Transfer a Theorem}
\label{sec:transf}

To start, we are limiting ourselves to transferring first-order formulas containing only universal quantifiers,
implication and relations.

\subsection{User-provided declarations}

We only require from the user to provide a set of surjective functions between related data-types,
along with a proof of surjectivity, and transfer lemmas.
That is, we can relate two data-types $A$ and $A'$ by producing a function $f : A \rightarrow A'$
and a proof that $f$ is surjective.
To ease our task, we will require that the proof that $f$ is surjective be given by producing a
right-inverse\footnote{In other words, using
terminology of category theory, we ask
that $g$ be a section of $f$ and $f$ be a retraction of $g$.} $g$ and a proof that
\[
    \forall x' \in A', f( g(x') ) = x' \enspace .
\]
If the user wishes to transfer a relation $R \in A \times A \times \ldots \times A$ to a relation
$R' \in A' \times A' \times \ldots \times A' \enspace$, she must provide a transfer lemma of the
form
\[
    \forall x_1 \ldots x_n \in A, R(x_1, \ldots, x_n) \Rightarrow R'(f(x_1), \ldots, f(x_n))
\]
where $f$ is called the transfer function between $R$ and $R'$.

The declared surjections and transfer lemmas will be stored in tables (maps).
A given surjection can be retrieved by looking for a pair of data-types while a given transfer
lemma can be retrieved by looking for a pair of relations.
There can be only one stored item for each key which prevents defining several distinct isomorphisms
between two structures.

Example \ref{expl:natN} shows how this is enough for transferring interesting theorems from one
data-type to another.

\begin{example}
    \label{expl:natN}
    Suppose we are given two data-types to represent $\mathbb{N}$,
    called nat and N together with two relations
    $\leq_{\mathrm{nat}}$ and $\leq_{\mathrm{N}}$.

    We know nothing of their implementation but we are also given two
    functions $\mathrm{N.to\_nat} : \mathrm{N} \rightarrow \mathrm{nat}$
    and $\mathrm{N.of\_nat} : \mathrm{nat} \rightarrow \mathrm{N}$
    and the four accompanying axioms:

    \begin{axiom}[Surjectivity of N.to\_nat]
        \label{axiom:surj2}
        \[
            \forall x \in \mathrm{nat}, \mathrm{N.to\_nat}( \mathrm{N.of\_nat}(x) ) = x \enspace .
        \]
    \end{axiom}

    \begin{axiom}[Surjectivity of N.of\_nat]
        \label{axiom:surj1}
        \[
            \forall x' \in \mathrm{N}, \mathrm{N.of\_nat}( \mathrm{N.to\_nat}(x') ) = x' \enspace .
        \]
    \end{axiom}

    \begin{axiom}[Transfer from $\leq_{\mathrm{N}}$ to $\leq_{\mathrm{nat}}$ by N.to\_nat]
        \label{axiom:transf2}
        \[
            \forall x', y' \in \mathrm{N},
            x' \leq_{\mathrm{N}} y' \Rightarrow
            \mathrm{N.to\_nat}(x') \leq_{\mathrm{nat}}
            \mathrm{N.to\_nat}(y') \enspace .
        \]
    \end{axiom}

    \begin{axiom}[Transfer from $\leq_{\mathrm{nat}}$ to $\leq_{\mathrm{N}}$ by N.of\_nat]
        \label{axiom:transf1}
        \[
            \forall x, y \in \mathrm{nat},
            x \leq_{\mathrm{nat}} y \Rightarrow
            \mathrm{N.of\_nat}(x) \leq_{\mathrm{N}} \mathrm{N.of\_nat}(y) \enspace .
        \]
    \end{axiom}
    Finally, we are given the following result to transfer:

    \setcounter{theorem}{\value{axiom}}
    \begin{axiom}[Transitivity of $\leq_{\mathrm{nat}}$]
        \label{axiom:transit}
        \[
            \forall x, y, z \in \mathrm{nat},
            x \leq_{\mathrm{nat}} y \Rightarrow
            y \leq_{\mathrm{nat}} z \Rightarrow x \leq_{\mathrm{nat}} z \enspace .
        \]
    \end{axiom}
    All these results enable us indeed to transfer Axiom \ref{axiom:transit}
    into Theorem \ref{thm:transit}.

    \begin{theorem}[Transitivity of $\leq_{\mathrm{N}}$]
        \label{thm:transit}
        \[
            \forall x', y', z' \in \mathrm{N},
            x' \leq_{\mathrm{N}} y' \Rightarrow y' \leq_{\mathrm{N}} z' 
            \Rightarrow x' \leq_{\mathrm{N}} z' \enspace .
        \]
    \end{theorem}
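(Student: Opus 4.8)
The plan is to push the goal about $\leq_{\mathrm N}$ down to the corresponding statement about $\leq_{\mathrm{nat}}$ via the function $\mathrm{N.to\_nat}$, discharge it with Axiom~\ref{axiom:transit}, and then transport the conclusion back up via $\mathrm{N.of\_nat}$, cleaning up with the surjectivity equation. Concretely, I would first introduce the three variables $x', y', z' \in \mathrm N$ together with the hypotheses $H_1 : x' \leq_{\mathrm N} y'$ and $H_2 : y' \leq_{\mathrm N} z'$. Applying the transfer lemma of Axiom~\ref{axiom:transf2} to $H_1$ and to $H_2$ yields $\mathrm{N.to\_nat}(x') \leq_{\mathrm{nat}} \mathrm{N.to\_nat}(y')$ and $\mathrm{N.to\_nat}(y') \leq_{\mathrm{nat}} \mathrm{N.to\_nat}(z')$.

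Next, I would instantiate Axiom~\ref{axiom:transit} at $x := \mathrm{N.to\_nat}(x')$, $y := \mathrm{N.to\_nat}(y')$, $z := \mathrm{N.to\_nat}(z')$ and feed it the two facts just obtained, which gives $\mathrm{N.to\_nat}(x') \leq_{\mathrm{nat}} \mathrm{N.to\_nat}(z')$. Applying the transfer lemma of Axiom~\ref{axiom:transf1} (with $x := \mathrm{N.to\_nat}(x')$, $y := \mathrm{N.to\_nat}(z')$) then produces $\mathrm{N.of\_nat}(\mathrm{N.to\_nat}(x')) \leq_{\mathrm N} \mathrm{N.of\_nat}(\mathrm{N.to\_nat}(z'))$. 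Finally, rewriting twice with Axiom~\ref{axiom:surj1} — once at $x'$, once at $z'$ — replaces $\mathrm{N.of\_nat}(\mathrm{N.to\_nat}(x'))$ by $x'$ and $\mathrm{N.of\_nat}(\mathrm{N.to\_nat}(z'))$ by $z'$, leaving exactly the goal $x' \leq_{\mathrm N} z'$. Note that Axiom~\ref{axiom:surj2} is not needed for this particular transfer; it would only be called upon if some variable occurred in the source statement without a counterpart in the target.

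None of these steps is individually difficult; the real work — and the thing the automation must get right — is the bookkeeping. The algorithm has to decide to use $\mathrm{N.to\_nat}$ as the transfer function going "down" and $\mathrm{N.of\_nat}$ as the one coming "back up", to instantiate each universally quantified variable of the source axiom with the image $\mathrm{N.to\_nat}(\cdot)$ of the corresponding target variable, and to remember to discharge the residual $\mathrm{N.of\_nat}(\mathrm{N.to\_nat}(\cdot))$ subterms using the surjectivity equations so that the final term matches the goal syntactically. Thus the main obstacle is not mathematical depth but arranging this chain of instantiations and rewrites so that everything lines up; this is precisely what the inference procedure of Section~\ref{sec:transf} is designed to do.
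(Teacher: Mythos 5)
Your proposal is correct and matches the paper's proof step for step: transfer both hypotheses down to $\leq_{\mathrm{nat}}$ via Axiom~\ref{axiom:transf2}, apply Axiom~\ref{axiom:transit}, come back up via Axiom~\ref{axiom:transf1}, and clean up with Axiom~\ref{axiom:surj1}. Your observation that Axiom~\ref{axiom:surj2} is not needed is also exactly the remark the paper makes after its proof.
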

    \begin{proof}
        Let $x', y', z' \in \mathrm{N}$
        and assume that the following two hypotheses hold:
        \begin{equation}
            \label{eqn:hyp1}
            x' \leq_{\mathrm{N}} y' \enspace ,
        \end{equation}
        \begin{equation}
            \label{eqn:hyp2}
            y' \leq_{\mathrm{N}} z' \enspace .
        \end{equation}
        From (\ref{eqn:hyp1}) (respectively (\ref{eqn:hyp2})) and Axiom \ref{axiom:transf2},
        we draw
        \begin{equation}
            \label{eqn:concl1}
            \mathrm{N.to\_nat}(x') \leq_{\mathrm{nat}}
            \mathrm{N.to\_nat}(y') \enspace ,
        \end{equation}
        \begin{equation}
            \label{eqn:concl2}
            \mathrm{N.to\_nat}(y') \leq_{\mathrm{nat}}
            \mathrm{N.to\_nat}(z') \enspace .
        \end{equation}
        We can now apply Axiom \ref{axiom:transit}
        to $\mathrm{N.to\_nat}(x')$,
        $\mathrm{N.to\_nat}(y')$ and $\mathrm{N.to\_nat}(z')$ and conclude
        \begin{equation}
            \label{eqn:concl3}
            \mathrm{N.to\_nat}(x') \leq_{\mathrm{nat}}
            \mathrm{N.to\_nat}(z') \enspace .
        \end{equation}
        We then apply Axiom \ref{axiom:transf1} to get
        \begin{equation}
            \mathrm{N.of\_nat}( \mathrm{N.to\_nat}(x') )
            \leq_{\mathrm{N}} \mathrm{N.of\_nat}( \mathrm{N.to\_nat}(z') ) \enspace .
        \end{equation}
        That is (rewriting with Axiom \ref{axiom:surj1}):
        \begin{equation}
            x' \leq_{\mathrm{N}} z' \enspace .
        \end{equation}
        \qed
    \end{proof}

    You will have noticed that Axiom \ref{axiom:surj2} has not been useful here.
    It would have been if there had been a quantification to transfer inside one of the hypotheses.
    This suggests a similar example where Axiom \ref{axiom:surj2} would not hold,
    thus where there would be no isomorphism between the two related data-types.
    Such an example is provided in the repository containing the plugin: we transfer various
    theorems (such as transitivity of $\leq$) from $\mathbb{Z}$ to $\mathbb{N}$.

\end{example}

\subsection{Preliminaries in type-theory-based logic}

Understanding the proposed algorithm will not require much knowledge about
the internals of Coq:

\begin{itemize}
\item Dependent products are the way in which the Calculus of Inductive Constructions
\cite[Ch. 4]{Coq:manual}, the logical base of \textsc{Coq},
models both universal quantification and implication. The implication is just the
degenerate non-dependent case,
i.e. $A \Rightarrow B$ is just an abbreviation for $\forall x : A, B$ when $x$ does
not appear in $B$.

\item In the Calculus of Inductive Constructions as well as in
any other type-theory-based logic,
proofs can be viewed as programs, and in particular the proof $\rho_{A \Rightarrow B}$
of an implication $A \Rightarrow B$
can be viewed as a function that takes a proof $\rho_A$ of $A$ as argument
and produces a proof $\rho_{A \Rightarrow B} (\rho_A)$ of $B$.
\end{itemize}

\subsection{The algorithm}

Algorithm \ref{alg:transfer-simple} takes as input two formulas (called \emph{theorem} and \emph{goal})
differing only in the data-types that are quantified over and in the relations they contain,
as well as a proof of \emph{theorem}.
It outputs a proof of \emph{goal} provided that the differences between the two formulas all correspond
to previously declared surjections and transfer lemmas.

The algorithm is recursive over the structure of the two formulas (which must be the same).
There are two main cases: when the formulas are atoms (i.e. in our case, relations applied
to arguments) or dependent products.

\begin{algorithm}
  \caption{Transfer a Theorem
    \label{alg:transfer-simple}}
  \begin{algorithmic}[5]
    \Require{In the environment $\mathrm{\Gamma}$, $F$ and $F'$
    are two well-defined formulas \\
    and $\rho_F$ is a proof of $F$.}
    \Ensure{\Call{ExactModulo}{$\mathrm{\Gamma}, F, F', \rho_F$}
    is a proof of $F'$ in environment $\mathrm{\Gamma}$ or
    it is a failure.}
    \Statex
    \Function{ExactModulo}{$\mathrm{\Gamma}, F, F', \rho_F$}
        \If{$F = F'$}
            \State \Return{$\rho_F$}
        \ElsIf{$F = R(t_1, \ldots, t_n) \andif F' = R'(t'_1, \ldots, t'_n)$}
            \Let{$f$}{transfer function between $R$ and $R'$}
            \State \Comment \Return failure if it does not exist
            \Let{$\rho_{\mathrm{transfer}}$}{proof of compatibility of $f$
            with respect to $R$ and $R'$}
            \For{$i \gets 1 \textrm{ to } n$}
                \If{$t'_i \neq f(t_i)$}
                    \State \Return{failure}
                \EndIf
            \EndFor
            \State \Return{$\rho_{\mathrm{transfer}}
            (t_1, \ldots, t_n, \rho_F)$}
        \ElsIf{$F = \forall x : A, B \andif F' = \forall x' : A', B'$}
            \Let{$\mathrm{\Gamma}$}{$\mathrm{\Gamma}, x' : A'$}
            \Let{$t$}{\Call{ExactModulo}{$\mathrm{\Gamma}, A', A, x'$}}
            \If{$t \neq$ failure}
                \Let{$\rho_{\mathrm{rec}}$}{
                \Call{ExactModulo}{$\mathrm{\Gamma}, B, B', \rho_F(t)$}}
                \State \Comment \Return failure if $\rho_{\mathrm{rec}} =$ failure
                \State \Return $\lambda x' : A'\ldotp \rho_{\mathrm{rec}}$
            \Else
                \Let{$f$}{surjection from $A$ to $A'$}
                \Comment \Return failure if it does not exist
                \Let{$g$}{right-inverse of $f$}
                \Let{$\rho_{\mathrm{surjection}}$}{proof that $g$ is a right-inverse of $f$}
                \Let{$B_{\mathrm{subst}}$}{$B$ where $x$ was replaced by $g(x')$}
                \Let{$B'_{\mathrm{subst}}$}{$B'$ where $x'$ was replaced by $f(g(x'))$ in covariant places} \label{line:covariant}
                \Let{$\rho_{\mathrm{rec}}$}{\Call{ExactModulo}{$\mathrm{\Gamma},
                B_{\mathrm{subst}}, B'_{\mathrm{subst}}, \rho_F(g(x'))$}}
                \State \Comment \Return failure if $\rho_{\mathrm{rec}} =$ failure
                \State Now $\lambda x' : A'\ldotp \rho_{\mathrm{rec}}$ is a proof of
                $\forall x' : A', B'_{\mathrm{subst}}$.
                With the help of $\rho_{\mathrm{surjection}}$ we can transform it into $\rho_{F'}$
                a proof of $\forall x' : A', B'$.
                \State \Return $\rho_{F'}$
            \EndIf
        \Else
            \State \Return{failure}
        \EndIf
    \EndFunction
  \end{algorithmic}
\end{algorithm}

You will have noticed, at line \ref{line:covariant} of Algorithm
\ref{alg:transfer-simple}, the strange choice of substituting $x'$ with
$f(g(x'))$ only in covariant places. As $x' = f(g(x'))$, we could
have done the substitution wherever we liked. We do it only in
covariant places so that the formulas in the recursive calls will have
exactly the right form when reaching the atomic case (relations).
One can convince oneself that substituting in covariant
places is enough by observing what it gives on the last example
(transitivity of $\leq_{\mathrm{N}}$)
while remembering that the right-hand side of an
implication is covariant while the left-hand side is contravariant.

We could add support for logical connectives
such as $\wedge$ and $\vee$ or
the existential quantifier $\exists$ but as they play
no specific role in the Calculus of Inductive Constructions
(unlike universal quantification and implication),
we rather want a more general way of treating any such addition.
As for the negation $\neg A$, in \textsc{Coq} it is defined as
$A \Rightarrow \bot$ so it is already supported
provided we unfold its definition first.

\section{Generalizing}

Algorithm \ref{alg:transfer-simple} has quite a lot of limitations at the moment
which we plan to lift.

\paragraph{Functions.}
So far we have considered only relations. Even though any function can be expressed as a relation,
this path would require a lot of preliminary rewriting steps; thus
it would be a lot more convenient to be able to transfer functions directly.
Given that relations are represented as functions to the special sort \texttt{Prop} in \textsc{Coq},
what we need is a generalization where functions to any type, as well as
internal operators, would be supported.

\paragraph{New connectives.}
We want to be able to handle logical connectives such as $\wedge$ and $\vee$ but also various
other combinators and non-propositional functions.
For instance, we should be able to transfer theorems involving equality.

\paragraph{Other equivalence relations.}
Currently, Leibniz (structural) equality plays a special role as it has to appear
in the surjection lemmas. Leibniz equality has the advantage of allowing rewriting in any
subterm. But techniques have already been devised \cite{Sozeau2010}
to allow rewriting with other equivalence
relations and we plan to inspire from them.

\paragraph{No right-inverse.}
For simplicity, we have asked so far
for proofs of surjectivity which involved
producing a right-inverse. This has a major drawback. Indeed,
surjectivity is equivalent to having a right-inverse only
if we admit the Axiom of Choice.
We want our algorithm to be as general as possible,
therefore we will work to remove that requirement.

\subsection{Generalizing Declarations}

\label{subsec:gen_decl}

\subsubsection{Transfer lemmas.}

The \textsc{Coq} Morphisms library\footnote{The \textsc{Coq} Morphisms library is part of the work
of Matthieu Sozeau \cite{Sozeau2010}
to generalize rewriting for equivalence relations that are not Leibniz equality.
Its documentation is available online at \url{https://coq.inria.fr/library/Coq.Classes.Morphisms.html}.}
introduces a new notion of respectful morphisms for a binary homogeneous relation.
We draw from \cite{Cohen2013} the idea of using the generalized heterogeneous version for our transfer
declarations.
Heterogeneous relations bring us the ability to relate objects from one data-type with
objects from another data-type.

We will note
\begin{verbatim}
(R ##> R') f g := ∀ (x : X) (y : Y), R x y → R' (f x) (g y) .\end{verbatim}
This can also be seen as a (commutative) diagram.

\[
\begin{tikzcd}
X \arrow[leftrightarrow]{r}{R} \arrow[swap]{d}{f} & Y \arrow{d}{g} \\
X' \arrow[leftrightarrow]{r}{R'} & Y'
\end{tikzcd}
\]

It is easy to show that this corresponds precisely to a very general notion of homomorphism
that can be found in mathematics textbooks such as \cite[Ch. 5.7]{schmidt2011relational}.
The pair of mappings $(f,g)$ is a homomorphism
between the two ``structures'' $(X \times Y,R)$
and $(X' \times Y',R')$ if the following holds:
\[
    R \circ g \subseteq f \circ R'
\]
where $\circ$ is the relational composition, i.e.
\[
\forall x \in X, y' \in Y', \left[
(R \circ g)(x,y') \Leftrightarrow \exists y \in Y, R(x,y) \wedge g(y) = y'
\right] \enspace ,
\]
\[
\forall x \in X, y' \in Y', \left[
(f \circ R')(x,y') \Leftrightarrow \exists x' \in Y, f(x) = x' \wedge R'(x',y')
\right] \enspace .
\]

It will be possible to declare all sorts of transfer lemmas thanks
to the respectful arrow as can be seen in the following example.

\begin{example}
    Let us consider a heterogeneous binary relation \verb|natN| relating
    elements of \verb|nat| with elements of \verb|N|.
    One possible definition would be:
    \begin{verbatim}Definition natN x x' := N.of_nat x = x'.\end{verbatim}
    Then, we can declare how to transfer various functions and relations:
    \begin{verbatim}Theorem le_transfer : (natN ##> natN ##> impl) le N.le.\end{verbatim}
    where \verb|le| represents $\leq_{\mathrm{nat}}$,
    \verb|N.le| represents $\leq_{\mathrm{N}}$ and \verb|impl|
    is a relation corresponding to the implication (also, note that
    \verb|##>| is right-associative).
    That is, after unfolding the definitions of
    \verb|natN|, \verb|##>| and \verb|impl|:
    \begin{verbatim}Theorem le_transfer :
    ∀ (x : nat) (x' : N), N.of_nat x = x' →
    ∀ (y : nat) (y' : N), N.of_nat y = y' → le x y → N.le x' y'.\end{verbatim}

    Considering two new Boolean functions \verb|iszero_nat| and \verb|iszero_N|,
    we can make explicit how they relate in the following way:
    \begin{verbatim}Theorem iszero_transfer : (natN ##> @eq bool) iszero_nat iszero_N.\end{verbatim}
    where \verb+@eq bool+ is the Boolean equality.

    Finally, considering two operations \verb|Nat.add| and \verb|N.add|:
    \begin{verbatim}Theorem plus_transf : (natN ##> natN ##> natN) Nat.add N.add.\end{verbatim}
\end{example}

\subsubsection{Surjection lemmas.}

That very same idea of respectful morphisms can be used to replace
the surjection declarations we used so far.
Just as we had replaced the implication $\rightarrow$ by a new relation
\texttt{impl}, we will use a new relation \texttt{@all} to represent $\forall~$:
\begin{verbatim}@all A (λ x : A, B) := ∀ x : A, B .\end{verbatim}

Any surjection declaration in the style of Sec. \ref{sec:transf}:
\begin{verbatim}Declare Surjection f by (g, proof).\end{verbatim}
can be equivalently replaced by the following three declarations:
\begin{verbatim}Theorem R_surj : ((R ##> impl) ##> impl) (@all A) (@all A').
Theorem R_tot : ((R⁻¹ ##> impl) ##> impl) (@all A') (@all A).
Theorem R_func : (R ##> R ##> impl) (@eq A) (@eq A').\end{verbatim}
where \verb| R x x' := f x = x' | and \verb| R⁻¹ x' x := R x x' |.

The first declaration corresponds to the surjectivity of relation $R$
(also called right-totality).
The second and third declaration express the fact that $R$ is a mapping.
More precisely, the second declaration corresponds to the surjectivity of the inverse
relation, that is the (left-)totality of $R$.
The third declaration expresses the knowledge that $R$ is functional
(also called univalent in \cite[Ch. 5.1]{schmidt2011relational} or right-unique
elsewhere).

The three declarations provide interesting ``point-free'' formulations
of a relation totality and unicity properties. Let us unfold two of
them to give more intuition on what they mean:

\begin{verbatim}
Theorem R_surj :
    ∀ P P', (∀ (x : A) (x' : A'), R x x' → P x → P' x') →
    (∀ x : A, P x) → ∀ x' : A', P' x'.
Theorem R_func :
    ∀ (x : A) (x' : A'), R x x' →
    ∀ (y : A) (y' : A'), R y y' → x = y → x' = y'.\end{verbatim}
We immediately see that \verb|R_func| indeed expresses that $R$ is
functional (each input has at most one output).
As for \verb|R_surj|, while it is clearly a necessary
condition for surjectivity,
we will have to instantiate the
theorem with \verb|P = λ _ : A, True| and \verb|P' = λ x' : A', ∃ x : A, R x x'|
to see that it is sufficient.

We can already foresee two advantages of this new formulation of
surjectivity lemmas. First, it is more general as it will allow
considering data-types which are related by a non-functional or
non-total relation.
Second, we can already imagine replacing \verb|@eq| by any equivalence
relation and \verb|@all| by any bounded quantification, thus allowing
to relate two partial quotients and not only classic data-types.

\subsection{Transfer to the context}

In \cite{Sozeau2010}, Matthieu Sozeau
gives a set of inference rules to find where a rewrite can
occur and the proof that the rewrite is correct. Building the
proof will sometimes require prior declarations that
some functions are respectful morphisms for
some homogeneous relations.
For our purpose, we need to generalize these rules to
heterogeneous relations.

As before, we take a theorem and a goal
as arguments and we must produce a proof of \texttt{thm → goal}, that is
\texttt{impl thm goal}.
We borrow the notation
\[
    \mathrm{\Gamma} \vdash \tau \leadsto^R_p \tau'
\]
which means that given an environment $\mathrm{\Gamma}$
in which $\tau$ and $\tau'$ are well-defined,
$p$ is a proof of $R(\tau, \tau')$.

Initially, given a theorem $\mathrm{\Gamma} \vdash \tau$ and a goal
$\mathrm{\Gamma} \vdash \tau'$, we want to derive a judgment of the form:
\[
    \mathrm{\Gamma} \vdash \tau \leadsto^{\mathtt{impl}}_p \tau'
\]

\subsubsection{Rules.}

We give in Fig. \ref{fig:rules}
the rules to get to that judgment, adapted from \cite{Sozeau2010}. We have dropped
the \textsc{Unify} rule as it was used for rewriting but does not
apply in our case. To avoid unnecessary complexity,
we have also chosen to drop the \textsc{Sub} rule in a first version.

\begin{figure}
    \[
        \frac{
            p : R(\tau,\tau') \in \mathrm{\Gamma}
        }{
            \mathrm{\Gamma} \vdash
            \tau \leadsto^R_{p} \tau'
        } \mbox{ \textsc{Env} }
        \qquad
        \frac{
            p : R(\tau,\tau') \in \mathrm{Tables}
        }{
            \mathrm{\Gamma} \vdash
            \tau \leadsto^R_{p} \tau'
        } \mbox{ \textsc{Table} }
    \]
    \[
        \frac{
            \mathrm{\Gamma}, x : \tau_1, x' : \tau'_1, H : R(x,x') \vdash
            \tau_2 \leadsto^S_{p} \tau'_2
        }{
            \mathrm{\Gamma} \vdash
            \mathrm{\lambda} x : \tau_1. \tau_2
            \leadsto^{
                R~\#\#>~S
            }_{
                \mathrm{\lambda} x : \tau_1, x' : \tau'_1, H : R(x,x'). p
            }
            \mathrm{\lambda} x' : \tau'_1. \tau'_2
        } \mbox{ \textsc{Lambda} }
    \]
    \[
        \frac{
            \mathrm{\Gamma} \vdash
            f \leadsto^{R~\#\#>~S}_{p_f} f'
            \quad
            \mathrm{\Gamma} \vdash
            e \leadsto^{R}_{p_e} e'
        }{
            \mathrm{\Gamma} \vdash
            f(e) \leadsto^S_{p_f(e, e', p_e)} f'(e')
        } \mbox{ \textsc{App} }
    \]
    \[
        \frac{
            \mathrm{\Gamma} \vdash
            \mathtt{@all}~\tau_1~(\mathrm{\lambda} x : \tau_1. \tau_2)
            \leadsto^R_p
            \mathtt{@all}~\tau'_1~(\mathrm{\lambda} x' : \tau'_1. \tau'_2)
        }{
            \mathrm{\Gamma} \vdash
            \forall x : \tau_1, \tau_2 \leadsto^R_p
            \forall x' : \tau'_1, \tau'_2
        } \mbox{ \textsc{Forall} }
    \]
    \[
        \frac{
            \mathrm{\Gamma} \vdash
            \mathtt{impl}~\tau_1~\tau_2 \leadsto^R_p
            \mathtt{impl}~\tau'_1~\tau'_2
        }{
            \mathrm{\Gamma} \vdash
            \tau_1 \rightarrow \tau_2 \leadsto^R_p
            \tau'_1 \rightarrow \tau'_2
        } \mbox{ \textsc{Arrow} }
    \]
    \caption{\texttt{exact modulo} inference rules.}
    \label{fig:rules}
\end{figure}

From these rules, we plan to derive a deterministic algorithm,
which we will implement and test.

We will now illustrate each of these rules by a few examples,
taken from the transfer of Axiom \ref{axiom:transit}
(transitivity of $\leq_{\mathrm{nat}}$)
to Theorem \ref{thm:transit} (transitivity of $\leq_{\mathrm{N}}$).

\begin{example}
    Initially, we want to find a judgment of the form
    \begin{alignat*}{8}
    \vdash \quad &
        \forall x, y, z \in \mathrm{nat},~&
        x \leq_{\mathrm{nat}} y~& \Rightarrow~&
        y \leq_{\mathrm{nat}} z~& \Rightarrow~& x \leq_{\mathrm{nat}} z & \\
    \leadsto^{\mathtt{impl}} \quad &
        \forall x', y', z' \in \mathrm{N},~&
        x' \leq_{\mathrm{N}} y'~& \Rightarrow~&
        y' \leq_{\mathrm{N}} z'~& \Rightarrow~& x' \leq_{\mathrm{N}} z' &
        \enspace .
    \end{alignat*}
    By rule \textsc{Forall}, this reduces to
    \begin{alignat*}{10}
    \vdash \quad &
        \mathtt{@all}~& \mathrm{nat}~&
        (
        \mathrm{\lambda} x : \mathrm{nat},~&
        \forall y, z \in \mathrm{nat},~&
        x \leq_{\mathrm{nat}} y~& \Rightarrow~&
        y \leq_{\mathrm{nat}} z~& \Rightarrow~& x \leq_{\mathrm{nat}} z
        ) \\
    \leadsto^{\mathtt{impl}} \quad &
        \mathtt{@all}~& \mathrm{N}~&
        (
        \mathrm{\lambda} x' : \mathrm{N},~&
        \forall y', z' \in \mathrm{N},~&
        x' \leq_{\mathrm{N}} y'~& \Rightarrow~&
        y' \leq_{\mathrm{N}} z'~& \Rightarrow~& x' \leq_{\mathrm{N}} z'
        ) \enspace .
    \end{alignat*}
    By rule \textsc{App}, this reduces to
    \begin{alignat}{8}
    \vdash \quad &
        \mathrm{\lambda} x : \mathrm{nat},~&
        \forall y, z \in \mathrm{nat},~&
        x \leq_{\mathrm{nat}} y~& \Rightarrow~&
        y \leq_{\mathrm{nat}} z~& \Rightarrow~& x \leq_{\mathrm{nat}} z
        \nonumber \\
    \leadsto^{R} \quad &
        \mathrm{\lambda} x' : \mathrm{N},~&
        \forall y', z' \in \mathrm{N},~&
        x' \leq_{\mathrm{N}} y'~& \Rightarrow~&
        y' \leq_{\mathrm{N}} z'~& \Rightarrow~& x' \leq_{\mathrm{N}} z'
        \enspace ,
    \label{eqn:lambda}
    \end{alignat}
    \begin{equation}
    \vdash \quad \mathtt{@all}~\mathrm{nat}
    \leadsto^{R~\#\#>~\mathtt{impl}} \quad \mathtt{@all}~\mathrm{N}
    \enspace .
    \label{eqn:all}
    \end{equation}
    Then (\ref{eqn:all}) is solved by applying rule \textsc{Table}.
    We get $R =$ \verb|natN ##> impl| .
    Finally, we can report the value of $R$ in (\ref{eqn:lambda})
    and apply rule \textsc{Lambda} and thus our initial problem reduces
    to
    \begin{alignat*}{8}
    x : \mathrm{nat}, x' : \mathrm{N}, H : \mathtt{natN}~x~x' \vdash\,&
        \forall y, z \in \mathrm{nat},~&
        x \leq_{\mathrm{nat}} y~& \Rightarrow~&
        y \leq_{\mathrm{nat}} z~& \Rightarrow~&
        x \leq_{\mathrm{nat}} z & \\
    \leadsto^{\mathtt{impl}}\,&
        \forall y', z' \in \mathrm{N},~&
        x' \leq_{\mathrm{N}} y'~& \Rightarrow~&
        y' \leq_{\mathrm{N}} z'~& \Rightarrow~& x'
        \leq_{\mathrm{N}} z' &
        \enspace .
    \end{alignat*}

    From now on,
    \begin{alignat*}{3}
    \mathrm{\Gamma} ~=~ &
    x : \mathrm{nat},~ x' : \mathrm{N},~ H : \mathtt{natN}~x~x', & \\ &
    y : \mathrm{nat},~ y' : \mathrm{N},~ H_1 : \mathtt{natN}~y~y', & \\ &
    z : \mathrm{nat},~ z' : \mathrm{N},~ H_2 : \mathtt{natN}~z~z' & \enspace .
    \end{alignat*}
    We now consider the problem of finding a judgment of the form
    \begin{alignat*}{6}
    \mathrm{\Gamma} \vdash \quad &
        x \leq_{\mathrm{nat}} y~& \Rightarrow~&
        y \leq_{\mathrm{nat}} z~& \Rightarrow~& x \leq_{\mathrm{nat}} z & \\
    \leadsto^{\mathtt{impl}} \quad &
        x' \leq_{\mathrm{N}} y'~& \Rightarrow~&
        y' \leq_{\mathrm{N}} z'~& \Rightarrow~& x' \leq_{\mathrm{N}} z' &
        \enspace .
    \end{alignat*}
    By rule \textsc{Impl}, this reduces to
    \begin{alignat*}{6}
    \mathrm{\Gamma} \vdash \quad &
        \mathtt{impl} ~&
        (x \leq_{\mathrm{nat}} y)~&
        (y \leq_{\mathrm{nat}} z~& \Rightarrow~& x \leq_{\mathrm{nat}} z) &
        \\
    \leadsto^{\mathtt{impl}} \quad &
        \mathtt{impl} ~&
        (x' \leq_{\mathrm{N}} y')~&
        (y' \leq_{\mathrm{N}} z'~& \Rightarrow~& x' \leq_{\mathrm{N}} z') &
        \enspace .
    \end{alignat*}
    By rule \textsc{App}, this reduces to
    \begin{equation}
    \mathrm{\Gamma} \vdash
    y \leq_{\mathrm{nat}} z \Rightarrow x \leq_{\mathrm{nat}} z
    \leadsto^{R}
    y' \leq_{\mathrm{N}} z' \Rightarrow x' \leq_{\mathrm{N}} z'
    \enspace ,
    \label{eqn:leq}
    \end{equation}
    \begin{equation}
    \mathrm{\Gamma} \vdash
    \mathtt{impl}~(x \leq_{\mathrm{nat}} y)
    \leadsto^{R \#\#> \mathtt{impl}}
    \mathtt{impl}~(x' \leq_{\mathrm{N}} y')
    \enspace .
    \label{eqn:impl}
    \end{equation}
    By rule \textsc{App}, (\ref{eqn:impl}) reduces again to
    \begin{equation}
    \mathrm{\Gamma} \vdash
    x \leq_{\mathrm{nat}} y \leadsto^{S} x' \leq_{\mathrm{N}} y'
    \enspace ,
    \label{eqn:hyp}
    \end{equation}
    \begin{equation}
    \mathrm{\Gamma} \vdash
    \mathtt{impl} \leadsto^{S \#\#> R \#\#> \mathtt{impl}} \mathtt{impl}
    \enspace .
    \label{eqn:implimpl}
    \end{equation}
    We will make sure that
    the tables are pre-filled so that judgments such as
    (\ref{eqn:implimpl}) can be solved with rule \textsc{Table}.
    In that case, we will get $S =$ \verb|impl⁻¹| and
    $R =$ \verb|impl| .
    Now by rule \textsc{App}, (\ref{eqn:hyp}) reduces to
    \begin{equation}
    \mathrm{\Gamma} \vdash y \leadsto^T y'
    \enspace ,
    \label{eqn:natN}
    \end{equation}
    \begin{equation}
    \mathrm{\Gamma} \vdash \mathtt{le}~x
    \leadsto^{T \#\#> \mathtt{impl^{-1}}} \mathtt{N.le}~x'
    \enspace .
    \label{eqn:leq_partial}
    \end{equation}
    Rule \textsc{Env} allows us to derive (\ref{eqn:natN}) with
    $T = \mathtt{natN}\enspace$.

    As for (\ref{eqn:leq_partial}), it can be solved after a few
    more steps by using the knowledge that
    \verb|(natN ##> natN ##> impl⁻¹) le N.le|, which is equivalent to
    \verb|(natN⁻¹| \verb|##>| \verb|natN⁻¹| \verb|##>| \verb|impl) N.le le|,
    which will be one of the user-provided transfer lemmas
    (it corresponds to Axiom \ref{axiom:transf2}).
    Therefore, there only remains to solve (\ref{eqn:leq}) in ways
    similar to this example.
\end{example}

\section{Related work}

\subsection{Proof reuse}

More than ten years ago, Nicolas Magaud \cite{Magaud2003} proposed
an extension of \textsc{Coq} that seemed to share our objectives.
Notably, he was able to transfer all the theorems that were,
at the time, in the standard Arith library, from \texttt{nat} to
\texttt{N}.

The approach was quite intricate because it was able to transfer
proofs, and not just theorems. Given two isomorphic data-types,
one will be considered as the \emph{origin type} and
the other one as the \emph{target type}.
The first step is to define functions to model the origin constructors
within the target type. Moreover, new recursion operators behaving like
the ones of the origin type are added to the target type.

With such a projection of the origin type into the target type, it is
easy to project operators and relations. Proofs are transferred in the
same way. The last step is to establish extensional equality between
projected operators and the corresponding native operators of the
target type.

While interesting, we do not need to take such a complicated path for
our objective which is only \emph{theorem reuse}. Using Magaud's approach
requires much more work in establishing the relations between the two
data-types. Moreover, our approach is more powerful in a sense: we can
transfer properties between two data-types even if we know nothing of
their content and the transfer lemmas where provided as axioms.

\subsection{Algorithm reuse}

A much more recent work by Cohen et al. \cite{Cohen2013} has been of much
inspiration to us. However, the focus is not the same.
In the context of program verification, the authors
propose a general method for algorithm reuse through
parametricity when refining proof-oriented data-types into efficient
computation-oriented data-types. Parametricity then enables the
automatic transfer of algorithm correctness proofs.
Although they give this general method, they explain why they do not
provide a plugin. Our focus being on transparency and usability by
mathematicians, we decided to create such a plugin.

An other inspiring characteristic of their work lies in that
they typically allow
refined types to contain more objects, including objects which would have
no meaning (no specification).
Although we currently require precisely the opposite so as to be
able to translate theorems stating properties \emph{for all} elements,
including unicity properties, we could quite easily add support for
bounded quantification. Bounded quantification would be useful for
transferring theorems from a subset type to the corresponding elements
of a larger type (for instance from $\mathbb{N}$ to non-negative elements
of $\mathbb{Z}$). Similarly, the new way to declare links between two
data-types presented in Sec. \ref{subsec:gen_decl} makes it easy to use
other equivalence relations than just Leibniz equality.

\subsection{Other works proposing a heterogeneous respectful arrow}

While Cohen et al. \cite{Cohen2013} inspired us to use a
generalized heterogeneous respectful arrow to allow for more precise
transfer declarations and remove the limitations of Algorithm
\ref{alg:transfer-simple},
there are many other (and sometimes older) works proposing
the same definition.
One example of such a work is \cite[Def. 13]{homeier2005design}.
But this is not surprising as we have remarked in Sec. \ref{subsec:gen_decl}
that this arrow just encodes for an already existing mathematical
notion of homomorphism.

Huffman and Kun\v{c}ar \cite{huffman2013lifting} go further as
they also show how the relational unicity and totality properties
can be expressed in terms of the respectful arrow.
They produced a Transfer package for \textsc{Isabelle/HOL} with
comparable objectives to ours, and their \texttt{transfer} tactic
is based on a two-step algorithm
sharing many ideas with Matthieu Sozeau's \cite{Sozeau2010}.
Nothing going as far as their Transfer package has yet been
created for \textsc{Coq}.

\section{Conclusion}

In this paper, we have shown how a simple algorithm can make use of a
few initial declarations to ease the reuse of results from one data-type
to another.

As we improve our algorithm and become able to transfer more theorems,
we will still have a lot to do in order to make our plugin as simple-to-use
as possible.
A first easy step will be to transform our \texttt{exact modulo} tactic
into an \texttt{apply modulo} tactic.
Then, we will need to allow for compositionality
in ways similar to \cite{Cohen2013} and \cite{huffman2013lifting}.
First, by allowing
and handling transfer declarations for parametrized types. Then,
by finding paths from one type to another, even when the relation
between the two was not declared, but can be established by going through
a sequence of transfers.

We view this work as a little but quite interesting step in the enormous task
of making the use of a formal proof system as easy as a
pen-and-paper proof.

\section*{Acknowledgments}

The authors wish to thank the anonymous reviewers for their helpful comments.

\bibliographystyle{plain}
\bibliography{paper}

\end{document}